\documentclass[11pt]{amsart}

\usepackage[dvipsnames]{xcolor}
\usepackage{todonotes}
\usepackage[utf8]{inputenc}
\usepackage{comment,enumerate,graphicx}
\usepackage{enumitem,amsmath,amssymb}
\usepackage{amsthm}
\usepackage{thmtools}
\usepackage{url}
\usepackage{caption}
\usepackage{subcaption}
\usepackage[top=23 mm, bottom=23 mm, left=22 mm, right= 22 mm]{geometry}
\usepackage{relsize}
\usepackage{microtype}

\usepackage[hidelinks]{hyperref}
\makeatletter
\newcommand{\labelinthm}[1]{%
   \label{temp#1}
   \protected@write \@auxout {}{\string \newlabel{#1}{{\emph{\ref{temp#1}}}{\thepage}{\emph{\ref{temp#1}}}{temp#1}{}} }%
}
\makeatother
\usepackage{adjustbox}
\usepackage{tikz}
\usetikzlibrary{math,calc,intersections, scopes}
\usetikzlibrary{positioning,arrows,shapes,decorations.markings,decorations.pathreplacing, decorations.pathmorphing,matrix,patterns}
\tikzstyle{vertex}=[circle,draw=black,fill=black,inner sep=0,minimum size=5pt,text=white,font=\footnotesize]
\tikzstyle{redvertex}=[circle,draw=red,fill=red,inner sep=0,minimum size=5pt,text=white,font=\footnotesize]
\definecolor{amber}{rgb}{1.0, 0.75, 0.0}
\definecolor{darkgreen}{rgb}{0.18, 0.7, 0.46}

\declaretheorem[name=Theorem]{theorem}

\newtheorem{lemma}[theorem]{\bf Lemma}

\newtheorem*{theorem*}{\bf Theorem}

\theoremstyle{definition}

\newcommand\claimproofend{\renewcommand{\qedsymbol}{$\boxdot$}
\end{proof}
\renewcommand{\qedsymbol}{$\square$}
}
\def\eps{\varepsilon}

\def\cA{\mathcal{A}}
\def\cB{\mathcal{B}}

\def\cF{\mathcal{F}}
\def\cG{\mathcal{G}}

\def\bE{\mathbb{E}}

\def\Pb{\mathbb{P}}

\newcommand{\DISJ}{\mathrm{DISJ}}
\newcommand{\SD}{\mathrm{SD}}

\title{\vspace{-0.9cm} Communication Complexity of Disjointness under Product Distributions}
\date{}

\author{Zach Hunter\textsuperscript{1}}
\author{Aleksa Milojevi\'c\textsuperscript{1}}
\author{Benny Sudakov\textsuperscript{1}}
\thanks{\textsuperscript{1}Department of Mathematics, ETH Z\"urich, Switzerland. Email: {\tt \{zach.hunter, aleksa.milojevic, benjamin.sudakov\}@math.ethz.ch}. Research supported in part by SNSF grant 200021-228014.}
\author{Istvan Tomon\textsuperscript{2}}
\thanks{\textsuperscript{2}Ume\r{a} University, \emph{e-mail}: \texttt{istvantomon@gmail.com}, Research supported in part by the Swedish Research Council grant VR 2023-03375.}

\begin{document}

\begin{abstract}
Determining the randomized (or distributional) communication complexity of disjointness is a central problem in communication complexity, having roots in the foundational work of Babai, Frankl, and Simon in the 1980s and culminating in the famous works of Kalyanasundaram-Schnitger and Razborov in 1992. However, the question of obtaining tight bounds for product distributions persisted until the more recent work of Bottesch, Gavinsky, and Klauck resolved it. In this note we revisit this classical problem and give a short, streamlined proof of the best bounds, with improved quantitative dependence on the error parameter. 

Our approach is based on a simple combinatorial lemma that may be of independent interest: if two sets drawn independently from two distributions are disjoint with non-negligible probability, then one can extract two subfamilies of reasonably large measure that are fully cross-disjoint (equivalently, a large monochromatic rectangle for disjointness).  
\end{abstract}

\maketitle

\section{Introduction}

Communication complexity studies how much information two separated parties must exchange in order to compute a function of their joint input.
Formally, fix finite sets $\cA,\cB$ and a Boolean function $f:\cA\times\cB\to\{0,1\}$.
Alice receives $A\in\cA$ and Bob receives $B\in\cB$, and neither party sees the other's input.
They communicate by exchanging bits across a two-way channel according to a (deterministic) protocol, at the end of which one of the players outputs a value intended to equal $f(A,B)$.
The \emph{cost} of a protocol is the maximum, over all inputs $(A,B)$, of the total number of bits exchanged on input $(A,B)$, and the \emph{deterministic communication complexity} of $f$, denoted by ${\rm CC}(f)$, is the minimum cost of a protocol that is correct on every input. We refer to \cite{KN, RY} for background, motivation, and applications.
	
One of the most studied functions is \emph{set disjointness}. Fix a ground set $[n]$ and let $\cA=\cB=2^{[n]}$ and
\[ \DISJ_n(A,B)=1 \quad\Longleftrightarrow\quad A\cap B=\varnothing.\]

Disjointness is the canonical example of a hard function to compute in this model and it is a starting point for many reductions. Its deterministic communication complexity is $\Theta(n)$, and the lower bound follows from the rank method by noting that ${\rm CC}(\DISJ_n)\geq \log_2 {\rm rank} M$, where $M_{A, B}=\DISJ_{n}(A, B)$ is a $2^n\times 2^n$ matrix of full rank which encodes the function $\DISJ_n$ (see \cite{S} for details).

In this paper we study disjointness in the \emph{distributional} model.
Fix a distribution $\mu$ on $\cA\times\cB$ and an error parameter $\eps\in(0,1/2)$. In this model, the input $(A, B)$ is sampled according to $\mu$, and $D^\mu_\eps(f)$ denotes the minimum communication cost of a deterministic protocol that computes $f$ correctly on a $(1-\eps)$-fraction of pairs of inputs (with respect to $\mu$). Note that the protocol may depend on $\mu$. 

By Yao's minimax principle, $\max_{\mu} D^\mu_\eps(f)$ equals the public-coin randomized communication complexity of $f$ with error~$\eps$. Therefore, determining the distributional complexity of disjointness was a problem of great interest, until it was famously resolved by Kalyanasundaram and Schnitger \cite{KS92} and by Razborov \cite{R92} in the 1990s. Razborov proved that disjointness has randomized communication complexity $\Omega(n)$ by constructing a hard distribution $\mu$ for which $D^\mu_{1/3}(\DISJ_n)=\Theta(n)$. Thus, in the fully general distributional/randomized setting disjointness remains linearly hard.

It is natural to consider the restricted setting when $\mu=\mu_A\times \mu_B$ is a \textit{product distribution}. This corresponds to sampling the sets $A\sim \mu_A, B\sim \mu_B$ independently. Following the classical work of Babai, Frankl, and Simon \cite{BFS86}, we define the \textit{strong distributional complexity} of a function $f$ as
\[ \mathrm{SD}_\eps(f):=\sup_{\mu=\mu_A\times \mu_B} D^\mu_\eps(f). \]

Remarkably, restricting to independent inputs decreases the worst-case complexity of disjointness from linear to about $\sqrt n$. More precisely, Babai--Frankl--Simon \cite{BFS86} proved in 1986 that for every $\eps<1/2$,
\[ \Omega(\sqrt n)\leq\SD_\eps(\DISJ_n)\leq O_\eps(\sqrt n\log_2 n). \]

The logarithmic gap between the upper and lower bound persisted for several decades, and was closed only in 2015 in the work of Bottesch, Gavinsky, and Klauck \cite{BGK15}, who proved the tight bound $\SD_\eps(\DISJ_n)=\Theta_\eps(\sqrt n)$.

They also considered this problem in a more general context, which interpolates between the product-distribution regime and the fully general regime by parameterizing distributions according to their mutual information $I(A:B)$. Recall, the \textit{mutual information} between random inputs $A$ and $B$ sampled from the distribution $\mu$ with marginals $\mu_A, \mu_B$ is 
\[I(A:B)=\sum_{a\in \cA}\sum_{b\in \cB}\mu(a, b)\log_2\frac{\mu(a, b)}{\mu_A(a)\mu_B(b)}.\]

More generally, they showed that if $\mu$ is a distribution with mutual information at most $k$, then $D_{\eps}^\mu(\DISJ_n)\leq O(\sqrt{n(k+1)}/\eps^2)$. The proofs of these results rely crucially on the H\aa stad--Wigderson protocol \cite{HW} which tests for disjointness of $\ell$-element sets inside an arbitrary ground set using only $O_\eps(\ell)$ bits.

In this paper we revisit this classical problem and give a simple streamlined proof of the result of Bottesch, Gavinsky, and Klauck~\cite{BGK15}. Additionally, this proof  improves the dependence on the error parameter $\eps$.

\begin{theorem}\label{thm:main_MI}
Let $\mu$ be a distribution on $2^{[n]}\times 2^{[n]}$ and let $(A,B)\sim\mu$ satisfy $I(A:B)\leq k$.
Then \[ D^\mu_\eps(\DISJ_n)\leq O\left(\sqrt{\frac{n(k+1)}{\eps}}\right). \]
\end{theorem}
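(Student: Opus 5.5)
The plan is to build the protocol from the combinatorial lemma mentioned in the abstract. If $A\sim\nu_A$ and $B\sim\nu_B$ are independent and disjoint with probability at least $\delta$, then there should be subfamilies $\cF,\cG$ that are fully cross-disjoint and have measures $\nu_A(\cF),\nu_B(\cG)\ge 2^{-O(\sqrt{n\log(1/\delta)})}$. The protocol peels off such rectangles greedily, so each round costs $O(1)$ bits and the measure-loss bookkeeping runs in the background. Afterwards I transfer the result from product distributions to distributions with mutual information at most $k$.

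\textbf{Step 1 (rectangle lemma).} I would prove the lemma by a density-increment argument. Consider the elements $i$ with $\Pb_{\nu_A}(i\in A)\ge \theta$, where $\theta\approx \sqrt{\log(1/\delta)/n}$.
\begin{itemize}
\item If Bob's distribution puts noticeable weight on sets containing such an $i$, condition $B$ on $i\notin B$. This costs little probability, because disjointness already forces $i\notin B$ with probability comparable to $\delta/(\text{loss})$.
\item Symmetrically, condition $A$ on $j\notin A$ for elements $j$ that are heavy for $B$.
\end{itemize}
After at most $O(\sqrt{n\log(1/\delta)})$ such steps, every element is light for at least one side in a suitable weighted sense. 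A union bound then shows that for a typical $A$, the conditional probability that $B$ meets $A$ is small. From there one extracts an honest cross-disjoint rectangle: take $\cF$ to be the sets $A$ with $\Pb(B\cap A\neq\varnothing)\le 1/2$, then intersect over a random small sample. Each conditioning halves or nearly preserves the measures, giving the stated exponent.

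\textbf{Step 2 (protocol for product $\mu$ with error $\eps$).} Write $\mu=\mu_A\times\mu_B$. Alice and Bob walk down a decision tree. At each node, the current distributions are $\mu$ conditioned on the rectangle $R$ of inputs still alive.
\begin{itemize}
\item If $\Pb_R(A\cap B=\varnothing)<\eps$, they output ``intersect''.
\item Otherwise, apply the lemma with $\delta=\eps$ to obtain a disjoint subrectangle $\cF\times\cG$. Alice sends whether $A\in\cF$ and Bob sends whether $B\in\cG$. If both answers are yes, they output ``disjoint''. Otherwise they recurse on one of the three remaining rectangles.
\end{itemize}
The cost is controlled by an amortized measure argument, which I expect to be the main obstacle. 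Charge the expected depth via a potential $\log(1/\mu_A(R_A))+\log(1/\mu_B(R_B))$. This potential grows by about $\mu_A(\cF)$ when Alice's answer is ``no'' (a tiny gain) and by $\log(1/\mu_A(\cF))$ when it is ``yes''. So I would instead compress the ``no'' branches: iterate the lemma to cover a constant fraction of the disjoint mass, using $O(\sqrt{n\log(1/\eps)}+\log(1/\eps))$ bits per phase.

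To get the sharper $\sqrt{n/\eps}$ dependence, I would replace $\log(1/\delta)$ by the threshold $\theta\approx\sqrt{1/(\eps n)}$. The union bound then gives intersection probability at most $\eps$-ish, and each conditioning costs $O(1)$ bits, for $O(\sqrt{n/\eps})$ heavy elements in total. The error arises only from stopping early. By Markov's inequality on the total communicated, the protocol can be truncated at $O(\sqrt{n/\eps})$ bits while adding at most $\eps$ error.

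\textbf{Step 3 (mutual information).} Given $I(A:B)\le k$, use the standard rejection/correlated-sampling reduction. The distribution $\mu$ is within the product $\mu_A\times\mu_B$ in the sense that $\mu(a,b)\le 2^{O(k/\eps)}\mu_A(a)\mu_B(b)$ except on a set of $\mu$-measure $\eps$, by Markov's inequality applied to the log-ratio. A protocol that errs with probability $\eps'$ under the product distribution therefore errs with probability at most $\eps+2^{O(k/\eps)}\eps'$ under $\mu$, which is too costly directly.

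Better: run Step 1 with the density of $\mu$ relative to the product. Rectangles of product measure $2^{-t}$ still have $\mu$-measure $\ge 2^{-t-O(k)}$ in aggregate. Concretely, set $\theta$ so that the per-coordinate union bound uses $\sqrt{n(k+1)/\eps}$ heavy coordinates. The $k$ enters additively in $\log(1/\delta)$, giving $O(\sqrt{n(k+1)/\eps})$.
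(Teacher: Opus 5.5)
There are genuine gaps in Steps 2 and 3. Your Markov bound in Step 3 already gives the paper's reduction, and you discarded it.

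\textbf{Step 2 does not yet produce a protocol.} You correctly see that the potential $\log(1/\mu_A(R_A))+\log(1/\mu_B(R_B))$ does not amortize, but the proposed repair is unsupported. A family of $2^{s}$ full rectangles covering a constant fraction of the disjoint mass is only a cover (a nondeterministic certificate), not a deterministic protocol of cost $O(s)$. Your three-way branching also gives no bound on the number of leaves.

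The missing idea is a progress measure for the ``yes'' branch, coming from the structure of disjointness. If $\cF\times\cG$ is full on the current ground set $X$, then $U=\bigcup_{F\in\cF}F$ and $V=\bigcup_{G\in\cG}G$ are disjoint. So one of them, say $U$, has $|U|\le |X|/2$. If Alice's set lies in $\cF$, then $A\cap B=\varnothing$ iff $(A\cap U)\cap(B\cap U)=\varnothing$, so the instance shrinks to ground set $U$. Otherwise $\cF$ is deleted from Alice's side, which shrinks the live measure by a factor $1-2^{-O(\sqrt{|X|\log(1/\eps)})}$. First discard inputs whose atoms have measure at most $\eps 2^{-2|X|}$, at error cost at most $\eps 2^{1-|X|}$. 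After that, deletion can happen only $t=2^{O(\sqrt{|X|\log(1/\eps)})}$ times. So each halving stage is a path-like tree with $O(t)$ leaves, which rebalances to depth $O(\log t)$. The halving then makes the total a geometric sum $O(\sqrt{n\log(1/\eps)})$.

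Your density-increment sketch of the lemma is also unsound as written. Conditioning on $i\notin B$ can destroy most of the disjoint mass when the sets $B\ni i$ are the small ones. The lemma follows more simply by letting $U$ be the union of $\ell\approx\sqrt{n/\log(1/\eps)}$ independent samples from $\mu_A$ and taking $2^U\times 2^{[n]\setminus U}$. Jensen gives $\bE[\mu_B(2^{[n]\setminus U})]\ge\eps^\ell$, and a union bound over sets $U$ with $\mu_A(2^U)\le 2^{-3n/\ell}$ handles Alice's side.

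\textbf{Step 3 rejects the argument that works and replaces it with a false one.} Your Markov bound is correct: the negative part of the log-ratio has $\mu$-expectation below $1$. Hence $\mu(a,b)\le 2^{(k+1)/\eps}\mu_A(a)\mu_B(b)$ outside a set of $\mu$-measure at most $\eps$. The resulting error $\eps+2^{(k+1)/\eps}\eps'$ is \emph{not} too costly once the product case has logarithmic dependence on the error. Take $\eps'=\eps 2^{-(k+1)/\eps}$; then $\log(1/\eps')=O((k+1)/\eps)$, and the cost is $O(\sqrt{n\log(1/\eps')})=O(\sqrt{n(k+1)/\eps})$ with error $2\eps$.

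This is exactly the paper's proof. There, the substate theorem supplies $\nu$ with $\|\mu-\nu\|_{\operatorname{TV}}\le\eps/2$ and $I_\infty(\nu)\le 8(k+1)/\eps$, and the product protocol is run for $\nu_A\times\nu_B$ with $\eps'=\eps 2^{-8(k+1)/\eps-1}$. Your Markov argument is an elementary substitute that even keeps the marginals of $\mu$. It also shows why aiming for $\sqrt{n/\eps}$ in the product case is the wrong target: that dependence cannot absorb $\eps'=\eps 2^{-\Theta((k+1)/\eps)}$.

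Your ``better'' route fails for two reasons.
\begin{itemize}
\item Small mutual information bounds the density $\mu/(\mu_A\times\mu_B)$ from above on most of the mass, not from below. For example, take $\mu$ uniform on pairs $a\neq b$: it has negligible mutual information, yet $\{a\}\times\{a\}$ has product measure positive and $\mu$-measure $0$.
\item If $k$ really entered additively into $\log(1/\delta)$, you would get $O(\sqrt{n(k+\log(1/\eps))})$. That contradicts the Bottesch--Gavinsky--Klauck lower bound $\Omega(\sqrt{n(k+1)/\eps})$ for $1\le k\le \eps n$.
\end{itemize}
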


Interestingly, Bottesch, Gavinsky and Klauck provide a lower bound which shows that Theorem~\ref{thm:main_MI} is tight, at least when the mutual information is not very large. Namely, for every $1\leq k\leq \eps n$ they construct a measure $\mu$ with mutual information at most $k$, for which $D^\mu_\eps(\DISJ_n)\geq \Omega(\sqrt{n(k+1)/\eps})$, showing that the above bound is tight in this regime.

The special case when $k=0$ corresponds to the case of product distributions, and there we obtain a very short and completely self-contained proof of the bounds for the strong distributional complexity of disjointness.

\begin{theorem}\label{thm:main_product}
For any $\eps<1/2$,
\[\SD_\eps(\DISJ_n)\leq O\left(\sqrt{n\log_2(1/\eps)}\right).\]
\end{theorem}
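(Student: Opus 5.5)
The plan is to combine a combinatorial ``large monochromatic rectangle'' lemma with a simple potential-function protocol.

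\emph{The lemma.} Let $\mu_A,\mu_B$ be distributions on $2^{[n]}$ with $\Pr_{A\sim\mu_A,B\sim\mu_B}[A\cap B=\varnothing]\ge\delta$. I would show that there are families $\cA'$ and $\cB'$ that are cross-disjoint (every $A\in\cA'$ misses every $B\in\cB'$) with
\[\mu_A(\cA')\ge 2^{-O(\sqrt{n\log_2(1/\delta)})}\quad\text{and}\quad\mu_B(\cB')\ge 2^{-O(\sqrt{n\log_2(1/\delta)})}.\]
\begin{itemize}
\item I would pick a threshold $t\approx\sqrt{\log_2(1/\delta)/n}$ and call an element $i$ \emph{$B$-heavy} if $\Pr[i\in B]\ge t$.
\item Let $H$ be the set of $B$-heavy elements. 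Conditioning on $A\cap B=\varnothing$ forces $A$ to avoid most of the $B$-mass. A union-bound/entropy count then shows that with probability at least $\delta/2$ the set $A$ meets only $O(\log(1/\delta)/t)=O(\sqrt{n\log_2(1/\delta)})$ heavy elements.
\item I would fix the trace $A\cap H=S$ that is most likely among disjoint pairs. This costs at most $2^{-O(\sqrt{n\log(1/\delta)})}$ in the measure of $\cA'$, by a binomial count of small subsets of $H$.
\item On Bob's side I would keep only sets avoiding $S$ and then pass to sets $B$ that avoid the union of the light elements used by $\cA'$. Since light elements have probability below $t$, and there are at most about $n$ of them, independence of the product measure suggests the loss is a factor of about $(1-t)^{O(n t)}$-type, i.e. again $2^{-O(\sqrt{n\log(1/\delta)})}$. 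Making this step rigorous is the delicate point of the lemma.
\end{itemize}

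\emph{The protocol.} The players maintain a current rectangle $R=\cA_R\times\cB_R$ of inputs consistent with the transcript so far, together with the conditioned product measure on $R$.
\begin{itemize}
\item If $\Pr[A\cap B=\varnothing\mid R]<\eps$, they output ``intersect''.
\item Otherwise they apply the lemma inside $R$ with $\delta=\eps$ to get a cross-disjoint subrectangle $\cA'\times\cB'$. Alice sends one bit saying whether $A\in\cA'$, and Bob sends one bit saying whether $B\in\cB'$. If both bits are $1$, they output ``disjoint'', which is correct on that whole subrectangle.
\end{itemize}
Error then accrues only at ``intersect'' leaves, each of which is wrong with conditional probability at most $\eps$, so the total error is at most $\eps$.

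\emph{Main obstacle: the communication cost.} The naive count of one bit per round with many rounds is far too expensive, so I would amortize. A ``yes'' answer costs one bit but shrinks a marginal by a factor of up to $2^{O(\sqrt{n\log(1/\eps)})}$. The fix is to replace the single membership bit by revealing information whose cost matches the progress it makes, i.e. a prefix-free (Huffman-type) encoding with respect to the current measure. The lemma's structure is a fixed trace $S$ together with a restriction on light elements. So I would have the active party transmit $S$ in about $|S|\log(n/|S|)$ bits, or better, use the fact that $\log\binom{|H|}{\le s}\lesssim s\log(1/t)$, which keeps this within $O(\sqrt{n\log(1/\eps)})$.

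The potential I would use is the total of $\log(1/\mu_A(\cA_R))+\log(1/\mu_B(\cB_R))$ along the path, where $\cA_R,\cB_R$ are the sides of the current rectangle $R$. The goal is to show that each round's communication is at most a constant times the increase of this potential, plus $O(1)$, and that the potential never exceeds $O(\sqrt{n\log(1/\eps)})$ before a leaf is reached. The latter bound would come from the fact that once both marginals are small, the heavy/light dichotomy forces the disjointness probability below $\eps$.

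Balancing these two bounds so that the $\log n$ loss of naive element-sending disappears is where I expect the real difficulty. If it fails, my fallback is the H\aa stad--Wigderson protocol on the light parts, where the sets have expected size $O(\sqrt{n\log(1/\eps)})$.
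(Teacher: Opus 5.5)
Both halves of your plan have genuine gaps.

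\textbf{The lemma.} You implicitly treat $\mu_B$ as a product measure over coordinates (``independence of the product measure''). In $\SD_\eps$, however, only $A$ and $B$ are independent of each other, and $\mu_A,\mu_B$ are arbitrary distributions on $2^{[n]}$. Your heavy-element step then fails. Let $B=[n]$ with probability $1/4$ and $B=\varnothing$ otherwise. Then every element is heavy (once $t<1/4$). Taking $\mu_A$ to be the point mass at $[n]$ gives $\Pb[A\cap B=\varnothing]=3/4$, yet $A$ always meets all $n$ heavy elements. The light-element estimate $(1-t)^{O(nt)}$ presumes the same coordinate independence. If $\mu_B$ is uniform on singletons, every element is light, yet Bob's side is empty as soon as the light elements used by $\cA'$ cover $[n]$. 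Even granting the heavy-element claim, the trace-fixing step loses a logarithm. With $|H|$ as large as $n$ and $s\approx\sqrt{n\log_2(1/\delta)}$, you get $\log_2\binom{|H|}{\le s}\approx s\log_2(1/t)\approx\tfrac12 s\log_2\big(n/\log_2(1/\delta)\big)$, since $\log_2(1/t)$ is not a constant. That is exactly the $\log n$ loss of Babai--Frankl--Simon you are trying to remove. The paper uses no element-wise statistics at all. It takes $U=A_1\cup\dots\cup A_\ell$ for $\ell=\lfloor\sqrt{n/\log_2(1/\eps)}\rfloor$ independent samples from $\mu_A$, and sets $\cA=2^U$, $\cB=2^{[n]\setminus U}$. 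A union bound over the $2^n$ possible $U$ gives $\mu_A(2^U)\ge 2^{-3n/\ell}$ with probability $1-2^{-2n}$. Jensen gives $\bE[\mu_B(\cB)]=\sum_B\mu_B(B)\,\mu_A(2^{[n]\setminus B})^\ell\ge\eps^\ell$, so some choice satisfies both bounds.

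\textbf{The protocol.} Your leaf-level error accounting is fine, but the step you flag as the main obstacle is where the whole proof lives, and your justification for it is false. The conditional measure on a rectangle is again an arbitrary product distribution. So the absolute sizes of $\mu_A(\cA_R),\mu_B(\cB_R)$ say nothing about $\Pb[A\cap B=\varnothing\mid R]$. Under uniform measures, $\{\varnothing\}\times\{\varnothing\}$ has both sides of measure $2^{-n}$ and conditional disjointness probability $1$. Moreover, nothing bounds the number of rounds in which Alice answers ``yes'' and Bob ``no''. Each such round can raise the potential by order $\sqrt{n\log_2(1/\eps)}$ with no structural progress. Since the expected final potential equals the entropy of the leaf reached, no re-encoding of the answers can get below it.

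The missing idea is a Nisan--Wigderson-type halving. In a full rectangle $\cF\times\cG$, the unions $U=\bigcup\cF$ and $V=\bigcup\cG$ are disjoint, so one of them, say $U$, has $|U|\le|X|/2$. Only Alice then speaks, announcing whether $A\cap X\in\cF$. A ``yes'' lets both players replace the ground set $X$ by $U$. A ``no'' deletes $\cF$, removing a $2^{-O(\sqrt{|X|\log_2(1/\eps)})}$ fraction of the remaining measure; if instead the current rectangle becomes almost-empty, the players answer ``intersect''. First discard inputs with $\nu_A(A\cap X)$ or $\nu_B(B\cap X)$ at most $\eps 2^{-2|X|}$, which costs error at most $\eps/2^{|X|-1}$. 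After that, at most $t=2^{O(\sqrt{|X|\log_2(1/\eps)})}$ consecutive ``no''s can occur. Each phase is therefore a protocol tree with $O(t)$ leaves, which rebalances to depth $O(\log_2 t)$. Summing over $|X|=n,n/2,n/4,\dots$ gives a geometric series totalling $O(\sqrt{n\log_2(1/\eps)})$.
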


The protocol we construct is deterministic, and it is based on a combinatorial lemma, which we believe may be of independent interest. Roughly speaking, the lemma says if two independent sets $A\sim\mu_A$ and $B\sim\mu_B$ are disjoint with non-negligible probability, then one can extract two subfamilies $\mathcal{A},\mathcal{B}\subseteq 2^{[n]}$ of reasonably large measure such that $A\cap B=\varnothing$ for every pair $(A,B)\in\mathcal{A}\times\mathcal{B}$ (equivalently, $\DISJ_n$ is identically $1$ on a large combinatorial rectangle). Using this lemma and a variant of the Nisan-Wigderson reduction, we can use large monochromatic combinatorial rectangles to obtain an efficient protocol for disjointness (see Section~\ref{sec:reduction} for details).

The proof of this lemma is self-contained and is inspired by ideas developed in our recent work on the Singer--Sudan conjecture concerning families with many disjoint pairs~\cite{HMST}. Namely, inspired by finding monochromatic rectangles and the log-rank conjecture, Singer and Sudan \cite{SS22} asked the following question: given families $\cA, \cB \subseteq 2^{[n]}$ with the property that at least $\eps |\cA||\cB|$ pairs $(A, B)\in \cA\times \cB$ are disjoint, how large families $\cF\subseteq \cA, \cG\subseteq \cB$ can we find with the property that every pair of sets $(F, G) \in \cF \times \cG$ is disjoint? In \cite{HMST}, we showed one can take $|\cF||\cG|\geq 2^{-O(\sqrt{n \log_2 1/\eps})}|\cA||\cB|$. Our main lemma is a generalization of this statement, where set families are replaced by distributions over sets.

\medskip\noindent
\textbf{Paper organization.} In Section~\ref{sec:disjoint}, we state and prove our main lemma. Then, in Section~\ref{sec:reduction} we use it to prove Theorem~\ref{thm:main_product}. Finally, we use Theorem~\ref{thm:main_product} in Section~\ref{sec:mutual} to derive Theorem~\ref{thm:main_MI}.

\section{Distributions with many disjoint sets}\label{sec:disjoint}

\begin{lemma}\label{lemma:main}
Let $\mu_A, \mu_B$ be probability distributions on $2^{[n]}$, and let $\eps\in (2^{-n}, 1/2)$ be a parameter such that  \[\Pb_{A\sim \mu_A, B\sim \mu_B}[A\cap B=\varnothing]\geq \eps.\] Then, there exist collections $\cA, \cB\subseteq 2^{[n]}$ such that $\mu_{A}(\cA), \mu_B(\cB)\geq 2^{-O(\sqrt{n\log_2 1/\eps})}$, and such that all pairs $(A, B)\in \cA\times \cB$ are disjoint.
\end{lemma}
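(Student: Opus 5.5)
Set $t=\lceil\sqrt{n\log_2(1/\eps)}\rceil$ and split into \emph{small} and \emph{large} sets according to a threshold $s=\sqrt{n/\log_2(1/\eps)}\cdot C$ on $|B|$, or more robustly on the $\mu$-weight of elements. The plan is a two-stage density increment combined with a random-subset construction, in the spirit of the set-family argument of \cite{HMST}. First I reduce to the case where all sets in the support of $\mu_B$ are small. Let $p=\Pb[A\cap B=\varnothing]\ge\eps$. Call an element $x\in[n]$ \emph{heavy} for $\mu_A$ if $\mu_A(x\in A)\ge \delta$, with $\delta=\sqrt{\log_2(1/\eps)/n}$. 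If $B$ contains many heavy elements, say $m$ of them, then intuitively $B$ is disjoint from $A$ with small probability. I make this rigorous by conditioning rather than averaging. Pick a heavy element $x$ and condition $\mu_A$ on $x\in A$, which multiplies measure by at most $1/\delta$, or on $x\notin A$ and $\mu_B$ on $x\notin B$. I run this as an iterative procedure that maintains the invariant $\Pb[A\cap B=\varnothing]\ge\eps'$, where $\eps'$ degrades slowly, and at each step either fixes an element into $A$ (losing a factor $\delta$ in $\mu_A$, at most $t$ times) or excludes it from $B$.

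Concretely, the steps are as follows.

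\begin{enumerate}
\item Restrict $\mu_B$ to the sets $B$ with $\Pb_A[A\cap B=\varnothing]\ge \eps/2$. This set of $B$ has $\mu_B$-measure at least $\eps/2$, and I renormalize.
\item Choose a random set $S\subseteq[n]$ by taking $t$ independent samples $A_1,\dots,A_t\sim\mu_A$ conditioned on being disjoint from a typical $B$, and let $S=A_1\cup\dots\cup A_t$.
\item Set $\cB=\{B: B\cap S=\varnothing\}$ and $\cA=\{A: A\subseteq S\}$, so cross-disjointness is automatic.
\item Show that $\mu_B(\cB)$ is large. Each $A_i\sim\mu_A$ sampled independently is disjoint from $B$ with probability at least $\eps/2$, so $\Pb[S\cap B=\varnothing]\ge(\eps/2)^t$ for every good $B$. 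Averaging, some choice of $S$ gives $\mu_B(\cB)\ge(\eps/2)^{t}\cdot\eps/2$, which is too small unless $t$ is about $\sqrt{n/\log(1/\eps)}$.
\item So take $t=\sqrt{n/\log_2(1/\eps)}$, giving $\mu_B(\cB)\ge 2^{-O(\sqrt{n\log(1/\eps)})}$.
\item Show that $\mu_A(A\subseteq S)$ is large. Elements that appear in a random $A$ with probability at least $1/t$ are likely covered by $S$. The danger is the light elements, and $A$ might contain many of them.
\end{enumerate}

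The main obstacle is this last step. I handle it by first passing to a sub-distribution of $\mu_A$ in which $A$ has at most $O(\sqrt{n\log(1/\eps)})$ light elements, or by absorbing the light elements by symmetry. The idea is to run the same construction with the roles of $A$ and $B$ swapped, and to use a dichotomy: either $|A|$ is at most $\sqrt{n\log(1/\eps)}$ for much of $\mu_A$, or $|B|$ is typically small. In the first case I fix the light part of $A$ by pigeonhole, paying $\binom{n}{\le \sqrt{n\log}}^{-1}=2^{-O(\sqrt{n\log(1/\eps)}\log n)}$. That loses a $\log n$ factor, so instead I fix the light part via conditioning on each light element's membership. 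This costs a factor $1-1/t$ per excluded element, i.e.\ $e^{-n/t}=2^{-O(\sqrt{n\log(1/\eps)})}$, and a factor $1/t$-ish only on included elements, of which there are few. I expect balancing these two exponents to be the delicate part of the proof.
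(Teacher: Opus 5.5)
There is a genuine gap. Your construction is the right one and matches the paper's: take $t=\lfloor\sqrt{n/\log_2(1/\eps)}\rfloor$ i.i.d.\ samples from $\mu_A$, let $S$ be their union, and set $\cA=2^S$, $\cB=2^{[n]\setminus S}$. You should drop the conditioning in step 2, since step 4 uses plain samples. Step 4 correctly lower-bounds $\bE[\mu_B(\cB)]$.

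The gap is step 6, which you flag as the main obstacle but never close. Nothing in the proposal shows that $\mu_A(2^S)$ is large, and the heavy/light conditioning you sketch cannot give the sharp bound. For instance, let $\mu_A$ be uniform on $k$-subsets of $[n]$ with $k=\lfloor n/(2t)\rfloor$. Then:
\begin{itemize}
\item every element has marginal at most $1/(2t)$, so all elements are light;
\item no set in the support avoids the light elements;
\item every $A$ in the support consists of $k=\Theta(\sqrt{n\log(1/\eps)})$ light elements, which is not ``few'';
\item every fixed light pattern has probability at most $\binom{n}{k}^{-1}=2^{-\Omega(\sqrt{n\log(1/\eps)}\,\log t)}$.
\end{itemize}
That last bound is exactly the $\log$-factor loss you were trying to avoid.

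The missing idea is that no structural information about $\mu_A$ is needed; a union bound over the at most $2^n$ possible values of $S$ suffices. Call $U\subseteq[n]$ bad if $\mu_A(2^U)\le 2^{-3n/t}$. The event $S=U$ forces $A_1,\dots,A_t\subseteq U$, which has probability $\mu_A(2^U)^t\le 2^{-3n}$. Hence $\Pb[\mu_A(2^S)<2^{-3n/t}]\le 2^{-2n}$, and $2^{-3n/t}=2^{-O(\sqrt{n\log(1/\eps)})}$.

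There is also a smaller gap in how you combine the two sides. ``Some choice of $S$'' in step 4 only gives an $S$ that is good for $\cB$, but you need a single $S$ good for both. Instead, turn the expectation into a probability using $0\le\mu_B(\cB)\le 1$: with $m=\bE[\mu_B(\cB)]$, you get $\Pb[\mu_B(\cB)\ge m/2]\ge m/2$. The paper gets $m=\sum_B\mu_B(B)\,\mu_A(2^{[n]\setminus B})^t\ge\eps^t$ by Jensen; your cruder bound $(\eps/2)^{t+1}$ also works after adjusting constants, e.g.\ using threshold $2^{-4n/t}$.

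This is where the hypothesis $\eps>2^{-n}$ enters, which your proposal never uses. It gives $\eps^t/2>2^{-n-1}\ge 2^{-2n}$, so the $\cB$-side success probability exceeds the $\cA$-side failure probability. Therefore some $S$ satisfies both bounds at once.
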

\begin{proof}
The main idea of the proof is to pick $\ell=\lfloor \sqrt{n/\log_2\eps^{-1}}\rfloor$ sets $A_1, \dots, A_\ell$ randomly according to $\mu_A$ and consider their union $U=\bigcup_{i=1}^\ell A_i$. Then set $\cA$ to be the family of sets in $2^{[n]}$ contained in $U$ and set $\cB$ to be the family of sets in $2^{[n]}$ contained in $[n]\backslash U$. This construction ensures that for each pair of sets $(A,B)\in \cA\times \cB$, $A$ and $B$ are disjoint, and therefore our main goal is to show that with positive probability, both $\mu_A(\cA)$ and $\mu_B(\cB)$ are sufficiently large.

First, we show that \[\Pb[\mu_A(\cA)\geq 2^{-3n/\ell}]\geq 1-2^{-2n}.\] Say that a set $U$ is \textit{bad} if $\Pb_{A\sim \mu_A}[A\subseteq U]=\mu_A(2^U)\leq 2^{-3n/\ell}$. Then the probability that $\ell$ randomly chosen sets from $\mu_A$ land inside a bad set $U$ is at most $(2^{-3n/\ell})^\ell=2^{-3n}$. Hence, taking the union bound over at most $2^n$ bad sets, the probability that $U=\bigcup_{i=1}^\ell A_i$ is bad can be bounded by $2^{-2n}$. Therefore, with probability at least $1-2^{-2n}$, we have $\mu_A(\cA)\geq 2^{-3n/\ell}$.

Next, we show that \[\Pb[\mu_B(\cB)\geq \eps^\ell/2]> 2^{-n}.\]
To do so, we shall compute the expected measure of $\cB$ by summing over all sets $B\subseteq [n]$ the probability that $B$ lands in $\cB$ times the weight of $B$ under $\mu_B$. By observing that \[\Pb\big[B\in \cB\big]=\Pb\big[B\cap \bigcup_{i=1}^\ell A_i=\varnothing\big]=\Pb\big[A_i\subseteq [n]\backslash B\text{ for all }i\in \{1, \dots, \ell\}\big]=\mu_A(2^{[n]\backslash B})^\ell,\] we obtain
\[\bE\big[\mu_B(\cB)\big]=\sum_{B\subseteq [n]} \mu_B(B)\cdot \Pb\big[B\in \cB\big]=\sum_{B\subseteq [n]} \mu_B(B)\cdot \mu_A(2^{[n]\backslash B})^\ell.\]
By applying Jensen's inequality to the convex function $x\mapsto x^\ell$ and the measure $\mu_B$ we obtain
\begin{align*}
\bE\big[\mu_B(\cB)\big]&=\sum_{B\subseteq [n]} \mu_B(B)\cdot \mu_A(2^{[n]\backslash B})^\ell\geq \Big(\sum_{B\subseteq [n]} \mu_B(B)\cdot \mu_A(2^{[n]\backslash B})\Big)^\ell\geq \eps^\ell.
\end{align*}
The last inequality follows from the fact $\sum_{B\subseteq [n]} \mu_B(B)\cdot \mu_A(2^{[n]\backslash B})=\Pb_{A\sim \mu_A, B\sim \mu_B}[A\cap B=\varnothing]\geq \eps$. Using that $0\leq \mu_B(\cB)\leq 1$, we can write \begin{equation}\label{eq: common neighborhood}\eps^\ell \leq \bE\big[\mu_B(\cB)\big]\leq  1\cdot \Pb\Big[\mu_B(\cB)\geq \eps^\ell /2\Big] + \frac{\eps^\ell }{2}\cdot \Pb\Big[\mu_B(\cB)< \eps^\ell/2\Big] <  \Pb\Big[\mu_B(\cB)\geq \eps^\ell /2\Big]+\frac{\eps^\ell }{2}.\end{equation}
Note that $\eps^\ell/2\geq \eps^{\sqrt{n/\log_2\eps^{-1}}}/2 = 2^{-\sqrt{n\log_2\eps^{-1}}-1}> 2^{-n-1}$, where the last inequality holds by our assumption $\eps> 2^{-n}$. Therefore, comparing the two sides of Eq.~\ref{eq: common neighborhood} gives $\Pb[\mu_B(\cB)\geq \eps^\ell/2]>\eps^{\ell}/2\geq 2^{-n-1}$. 

Since $\Pb[\mu_A(\cA)\geq 2^{-3n/\ell}]+\Pb[\mu_B(\cB)\geq \eps^\ell/2]>1$, there exists a choice of $A_1,\dots,A_\ell$ for which 
\begin{align*}
\mu_A(\cA)&\geq 2^{-3n/\ell}\geq 2^{-O(\sqrt{n\log_2\eps^{-1}})},\\
\mu_B(\cB)&\geq \eps^\ell /2=2^{-\ell\log_2\eps^{-1}-1}\geq 2^{-O(\sqrt{n\log_2\eps^{-1}})},
\end{align*}
where we have used that $\ell=\lfloor \sqrt{n/\log_2\eps^{-1}}\rfloor$. This completes the proof.
\end{proof}

\section{Distributional Complexity for Product Distributions}\label{sec:reduction}

In this section, we use a variant of the Nisan-Wigderson reduction \cite{NW} to prove Theorem~\ref{thm:main_product}, i.e. to prove that $D^{\mu_A\times \mu_B}_\eps(\DISJ_n)\leq O(\sqrt{n\log_2 1/\eps})$. We say that a measure $\nu$ is \textit{almost-empty} if $\Pb_{(A, B)\sim \nu}[A\cap B=\varnothing]\leq \eps/2$. Also, we say that a combinatorial rectangle $R=\cF\times \cG\subseteq 2^{[n]}\times 2^{[n]}$ is \textit{full} if all pairs $(F, G)\in \cF\times \cG$ are disjoint.

\begin{proof}[Proof of Theorem~\ref{thm:main_product}.]
We present an efficient protocol for computing $\DISJ_n$ correctly on a $(1-\eps)$-fraction of the inputs under the given measure $\mu$. Let $A, B\in 2^{[n]}$ be the inputs known to Alice and Bob. 

We first define a subprotocol, with the following goal. Assume that there is a set $X\subset [n]$, known to both Alice and Bob, such that $A\setminus X$ and $B\setminus X$ are disjoint, or equivalently, $A\cap B=\varnothing \Leftrightarrow (A\cap X)\cap (B\cap X)=\varnothing$. At the end of the subprotocol, either Alice and Bob declare whether $A$ and $B$ are disjoint, or they will agree on a set $Y$ of size $|Y|\leq |X|/2$ such that $A\setminus Y$ and $B\setminus Y$ are disjoint, effectively halving the size of the ground set. Additionally, if $T$ is the set of pairs of inputs where the subprotocol outputs an answer whether $A\cap B=\varnothing$ and $S$ is the set of pairs of inputs where the answer was incorrect, we will guarantee $\mu(S)\leq \frac{\eps}{2}\big( \mu(T)+\frac{1}{|X|}\big)$.

We claim that such a subprotocol can be executed with $O(\sqrt{|X|\log_2 1/\eps})$ bits of communication. This claim implies our theorem, since we can start with $X=[n]$, iterate the subprotocol until the size of the ground set reduces to $|X|\leq 4\log_2 1/\eps$, halving the size of the ground set in every step. If up to this point Alice and Bob did not decide whether $A\cap B=\varnothing$, Alice can communicate all elements of $A\cap X$ using $|X|=O(\sqrt{|X|\log_2 1/\eps})$ bits, thus solving the problem. The total number of bits communicated in this protocol is at most
$$\sum_{i=0}^{\lfloor \log_2 n\rfloor-1}O\left(\sqrt{\frac{n}{2^i} \log_2 1/\eps}\right)=O(\sqrt{n\log_2 1/\eps}).$$

Moreover, the measure of the set $S$ where the protocol makes an incorrect answer is at most $$\frac{\eps}{2}\left(1+\sum_{i=0}^{\lfloor \log_2 n\rfloor-1}\frac{1}{n/2^i}\right)\leq \frac{\eps}{2}\cdot 2=\eps.$$

\medskip\noindent
\textbf{Subprotocol specification:} Let $M_0$ be the disjointness matrix of the subsets of $X$, that is, for any $A_0, B_0\subseteq X$ we have $(M_0)_{A_0, B_0}=1 \Longleftrightarrow A_0\cap B_0=\varnothing$. Also, let $A'=X\cap A$, $B'=X\cap B$, $\cA_0=\cB_0=2^X$, and let $\nu=\nu_{A}\times \nu_{B}$ be the rescaling of the measure $\mu$ on $2^X\times 2^X$ such that $\nu$ is a probability measure. 

If $\nu_{A}(A')\leq \frac{\eps}{2^{2|X|}}$, then Alice communicates this to Bob, or if $\nu_{B}(B')\leq \frac{\eps}{2^{2|X|}}$, Bob communicates this to Alice using a single bit. If either of these two events happen, Alice and Bob may declare that the sets are, say, not disjoint. The measure of inputs where they make a mistake in this step is at most $\frac{\eps}{2^{|X|-1}}\leq \frac{\eps}{2|X|}$, since the total measure of pairs $(A', B')$ with $\nu_{A}(A')\leq \frac{\eps}{2^{2|X|}}$ or $\nu_{B}(B')\leq \frac{\eps}{2^{2|X|}}$ is at most $2\cdot 2^{|X|}\cdot {\eps}/{2^{2|X|}}=\eps/2^{|X|-1}$.

If $\nu$ is almost-empty, then Alice and Bob declare that their sets are not disjoint and end the protocol. Then their answer is correct for at least $(1-\eps/2)$ measure of the inputs in $\cA_0\times \cB_0$, and thus $\mu(S)\leq \frac{\eps}{2}\mu(T)+\frac{\eps}{2|X|}$ is satisfied.

If $\nu$ is not almost-empty, let $R=\cF\times \cG$ be the full rectangle in $M_0$ of largest measure. Since $\eps/2\geq 2^{-|X|}$ (due to the stopping condition on $X$), by Lemma~\ref{lemma:main}, we have $\nu(R)\geq 2^{-O(\sqrt{|X|\log_2 1/\eps})}$, which also implies $\nu_{A}(\cF),\nu_{B}(\cG)\geq 2^{-O(\sqrt{|X|\log_2 1/\eps})}$. Since all sets $F\in \cF$ are disjoint from all sets $G\in \cG$, we have that their unions $U=\bigcup_{F\in \cF}F$ and $V=\bigcup_{G\in \cG}G$ are also disjoint. Hence, $|U|+|V|\leq |X|$ and so one of $U, V$ has size at most $|X|/2$, say $U$ (otherwise, we may swap the roles of the players). If $A'\in \cF$, then $A'\subseteq U$, so $A'\setminus U$ and $B'\setminus U$ are disjoint. In this case, Alice and Bob set $Y:=U$ and stop the subprotocol. If $A'\notin \cF$, then define $\cA_{1}:=\cA_0\setminus \cF$ and $\cB_{1}:=\cB_0$.

One can iterate the above procedure, thus defining a sequence of families of subsets $\cA_0\supseteq \cA_1\supseteq \dots\supseteq \cA_t$ and $\cB_0\supseteq \cB_1\supseteq \dots\supseteq \cB_t$. Observe that for each $1\leq i\leq t$ we have $$\nu(\cA_{i}\times \cB_{i})\leq  \big(1-2^{-O(\sqrt{|X|\log_2 1/\eps})}\big)\nu(\cA_{i-1}\times \cB_{i-1}).$$
Therefore, if the subprotocol runs for $t$ steps, we have
$$\nu(\cA_t\times \cB_t)\leq \big(1-2^{-O(\sqrt{|X|\log_2 1/\eps})}\big)^t\leq \exp(-t2^{-O(\sqrt{|X|\log_2 1/\eps})}).$$
Setting $t:=2^{C \sqrt{|X|\log_2 1/\eps}}(|X|+\log_2 1/\eps)=2^{O(\sqrt{|X|\log_2 1/\eps})}$ for some large constant $C$, the right-hand-side is at most $2^{-4|X|-2\log_21/\eps}\leq \frac{\eps^2}{2^{4|X|}}$. But this is impossible as then $\nu(A'\times B')\leq \frac{\eps^2}{2^{4|X|}}$, implying that either $\nu_{A}(A')\leq \frac{\eps}{2^{2|X|}}$ or $\nu_{B}(B')\leq  \frac{\eps}{2^{2|X|}}$. Hence, this subprotocol runs for at most $t$ steps, and it defines a protocol tree with $O(t)$ leaves.

It is well-known (see e.g. Chapter 2, Lemma 2.8 in \cite{KN}) that every protocol tree can be rebalanced to be a decision tree with depth logarithmic in the number of leaves. This implies that the subprotocol can be executed with $O(\log_2 t)=O(\sqrt{|X|\log_2 1/\eps})$ bits of communication, verifying our claim.
\end{proof}

\section{Distributions with Mutual Information}\label{sec:mutual}

Next, we derive Theorem~\ref{thm:main_MI} from Theorem~\ref{thm:main_product}. Bottesch, Gavinsky and Klauck observed that the (classical) Substate theorem of Jain, Radhakrishnan and Sen \cite{JRS02} can be used to translate bounds from the independent setting to the setting of bounded mutual information. However, the main point here is that the bounds of Theorem~\ref{thm:main_product} have a very good dependence on $\eps$, which allows us to apply this reduction without big losses.

To state the Substate theorem, we recall the following two definitions. If $\mu, \nu$ are two probability distributions on $\cA\times \cB$, $\|\mu-\nu\|_{\operatorname{TV}}$ stands for the total variation distance between distributions $\mu$ and $\nu$, which is defined as $\|\mu-\nu\|_{\operatorname{TV}}=\max_{S\subseteq \cA\times \cB}|\mu(S)-\nu(S)|$. Further, if $\mu$ has marginals $\mu_A, \mu_B$, we define $I_\infty(\mu)=\max_{(a, b)\in \cA\times \cB} \log_2 \frac{\mu(a, b)}{\mu_A(a)\mu_B(b)}$.

\begin{theorem}[\cite{JRS02}]\label{thm:substate}
Let $\eps>0$ be a positive parameter, and let $\cA, \cB$ be finite sets. For every distribution $\mu$ on $\cA\times \cB$, with mutual information bounded by $k$, there exists another distribution $\nu$, such that $\|\mu-\nu\|_{\operatorname{TV}}\leq \eps$ and $I_{\infty}(\nu)\leq 4(k+1)/\eps$.
\end{theorem}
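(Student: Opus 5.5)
We will build $\nu$ explicitly from $\mu$. We discard the pairs where the pointwise information density is large, and then "repair" the marginals by adding back a small product measure. This keeps the marginals of $\nu$ equal to $\mu_A,\mu_B$, so $I_\infty(\nu)$ can be controlled against the original marginals. We may assume $\eps<1$, since otherwise the statement is trivial: take $\nu=\mu_A\times\mu_B$, which has total variation distance at most $1$ from $\mu$ and $I_\infty(\nu)=0$.

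\emph{Step 1 (Markov).} Let $f(a,b)=\log_2\frac{\mu(a,b)}{\mu_A(a)\mu_B(b)}$, so that $\bE_\mu[f]=I(A:B)\le k$. We bound the negative part using $\ln x\le x-1$:
\[\bE_\mu[f^-]\le \frac{1}{\ln 2}\sum_{a,b}\big(\mu_A(a)\mu_B(b)-\mu(a,b)\big)^+\le \frac{1}{\ln 2}<2.\]
Hence $\bE_\mu[f^+]\le k+2$. For a threshold $t$, the bad set $S=\{(a,b): f(a,b)>t\}$ then satisfies $\mu(S)\le (k+2)/t$. We will take $t=2(k+2)/\eps$, so that $\mu(S)\le\eps/2$. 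Let $G$ be the complement of $S$; on $G$ we have $\mu(a,b)\le 2^t\mu_A(a)\mu_B(b)$.

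\emph{Step 2 (repairing marginals).} Set $\lambda=1-\eps/2$ and $r=1-\lambda\mu(G)\ge\eps/2$. Let $\pi=\lambda\,\mu\cdot 1_G$, with marginals $\pi_A,\pi_B$. Since $\pi_A\le\mu_A$ and $\pi_B\le\mu_B$ pointwise, we may define the probability measures $\rho_A=(\mu_A-\pi_A)/r$ and $\rho_B=(\mu_B-\pi_B)/r$. Put
\[\nu=\pi+r\,\rho_A\times\rho_B .\]
A direct check shows that $\nu$ is a probability measure with marginals exactly $\mu_A$ and $\mu_B$. For total variation, we have $\mu-\nu\le \mu 1_S+(1-\lambda)\mu 1_G$ pointwise on the positive side. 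This gives $\|\mu-\nu\|_{\operatorname{TV}}\le \mu(S)+\eps/2\le\eps$.

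\emph{Step 3 (bounding $I_\infty$).} The two pieces of $\nu$ are bounded separately against $\mu_A(a)\mu_B(b)$. On $G$ we have $\pi(a,b)\le 2^t\mu_A(a)\mu_B(b)$. Since $\rho_A\le\mu_A/r$ and $\rho_B\le\mu_B/r$, we also get $r\rho_A(a)\rho_B(b)\le \mu_A(a)\mu_B(b)/r\le (2/\eps)\mu_A(a)\mu_B(b)$. Therefore
\[I_\infty(\nu)\le\log_2\!\big(2^t+2/\eps\big)\le t+\log_2(2/\eps)+1 .\]
The remaining work is to check that this is at most $4(k+1)/\eps$. We have $t=2(k+2)/\eps$, and $\log_2(2/\eps)+1\le 2/\eps$ for $\eps\le1$, which gives a total of at most $(2k+6)/\eps$. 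This falls slightly short of $4(k+1)/\eps$ when $k<1$. For that case, we will sharpen the negative-part estimate, or take $t$ slightly smaller and absorb the difference in the split of $\eps$ between $\mu(S)$ and $1-\lambda$. If necessary we will accept a marginally worse constant, which is harmless for Theorem~\ref{thm:main_MI}.

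The main obstacle is Step 2: simply conditioning $\mu$ on $G$ would shrink the marginals uncontrollably and could blow up $I_\infty$. The fix is the product-measure correction, together with deliberately reserving mass $r\ge\eps/2$ for it, so that the ratio it contributes is at most $2/\eps$.
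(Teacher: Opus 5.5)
Your argument is correct, and all the estimates check out:
\begin{itemize}
\item the bound $\bE_\mu[f^-]\le 1/\ln 2$ via $\ln x\le x-1$, and the Markov step;
\item $\nu=\pi+r\,\rho_A\times\rho_B$ is a probability measure whose marginals are exactly $\mu_A,\mu_B$;
\item $\|\mu-\nu\|_{\operatorname{TV}}\le\mu(S)+(1-\lambda)\mu(G)\le\eps$;
\item the pointwise domination $\nu\le(2^t+2/\eps)\,\mu_A\times\mu_B$.
\end{itemize}
The loose end you flag for $k<1$ closes in one line, using a number you already derived. Do not round $1/\ln 2$ up to $2$. Take $t=2(k+1/\ln 2)/\eps$, so that still $\mu(S)\le\eps/2$. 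Then $\log_2(2^t+2/\eps)\le 1+\max\{t,\log_2(2/\eps)\}=1+t$, where the equality holds because $t\ge\frac{2}{\eps\ln 2}\ge 1+\log_2(1/\eps)$ for $\eps\le 1$. This gives $I_\infty(\nu)\le 1+\frac{2k+2/\ln 2}{\eps}\le\frac{4(k+1)}{\eps}$, since $\eps+2/\ln 2<4\le 2k+4$. So you obtain exactly the stated constant, with no need to weaken it.

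As for the route, the paper gives no proof of this statement; it imports it from Jain, Radhakrishnan and Sen. The usual proof of the classical substate theorem is your Step~1 followed by plain conditioning, $\nu=\mu 1_G/\mu(G)$. That yields $\nu\le 2^{t+1}\,\mu_A\times\mu_B$, i.e.\ domination by the product of $\mu$'s marginals, not $\nu$'s own. Your Step~2 is what upgrades this to $I_\infty(\nu)$ in the self-referential sense of the paper's definition. That is what the proof of Theorem~\ref{thm:main_MI} literally invokes, namely $\nu\le 2^{8(k+1)/\eps}\,\nu_A\times\nu_B$. You are right that plain conditioning can make $\nu_A(a)$ much smaller than $\mu_A(a)$ and so does not control that quantity. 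For the application itself, domination by any fixed product measure is enough: one can run the product protocol on $\mu_A\times\mu_B$ instead. So the citation loses nothing there. Your version makes Section~\ref{sec:mutual} self-contained and matches the theorem exactly as stated.
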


Using this theorem, we give a proof of Theorem~\ref{thm:main_MI}.

\begin{proof}[Proof of Theorem~\ref{thm:main_MI}.]
Let us consider the measure $\nu$ such that $\|\mu-\nu\|_{\operatorname{TV}}\leq \eps/2$ and $I_\infty(\nu)\leq 8(k+1)/\eps$, which exists due to Theorem~\ref{thm:substate}. If the marginals of $\nu$ are $\nu_A$ and $\nu_B$, then we consider the product distribution $\nu_A\times \nu_B$ and we run the protocol from the proof of Theorem~\ref{thm:main_product} with the error parameter $\eps'= \eps 2^{-8(k+1)/\eps-1}$ and distribution $\nu_A\times \nu_B$.

The cost of this protocol is $O(\sqrt{n\log_2 1/\eps'})=O(\sqrt{n(k+1)/\eps})$. Moreover, if $S\subseteq 2^{[n]}\times 2^{[n]}$ is the set of pairs of inputs where it makes the mistake, we have that $S$ has measure at most $(\nu_A\times \nu_B)(S)\leq \eps'$. Since $I_\infty(\nu)\leq 8(k+1)/\eps$, this means that for any pair $(A, B)\in \cA\times \cB$ we have $\nu(A, B)\leq 2^{8(k+1)/\eps}\nu_A(A)\nu_B(B)$ and so $\nu(S)\leq 2^{8(k+1)/\eps}\eps'\leq \eps/2$.

Finally, since $\|\mu-\nu\|_{\operatorname{TV}}\leq \eps/2$, we have $|\mu(S)-\nu(S)|\leq \eps/2$, allowing us to conclude $\mu(S)\leq \eps$. Hence, we have a protocol of complexity at most $O(\sqrt{n(k+1)/\eps})$ making an error on at most an $\eps$-fraction of inputs, completing the proof.
\end{proof}

\medskip
\noindent
{\bf Acknowledgment.} The authors would like to thank Avi Wigderson for valuable insights and for bringing the paper of Babai, Frankl, and Simon to their attention.

\bibliographystyle{alpha}

\end{document}